\def\BibTeX{{\rm B\kern-.05em{\sc i\kern-.025em b}\kern-.08em
    T\kern-.1667em\lower.7ex\hbox{E}\kern-.125emX}}
\newcommand{\Hhat}{\hat{H}}
\newcommand{\Pmax}{|P|_{max}}
\newcommand{\phat}{\hat{p}}
\newcommand{\prob}{\mathbf{P}}
\newcommand{\E}{\mathbb{E}}
\newcommand{\Mbar}{\bar{M}}
\newtheorem{Lemma}{Lemma}  
\newtheorem{Theorem}{Theorem}
\newtheorem{Assumption}{Assumption}
\begin{document}

\title{Learning the Influence Graph of a Markov Process that Randomly Resets to the Past\\
}
\author{\IEEEauthorblockN{Sudharsan Senthil, Avhishek Chatterjee}
\IEEEauthorblockA{Indian Institute of Technology Madras, India\\
sudharsansenthil@hotmail.com, avhishek@ee.iitm.ac.in}
}
\maketitle

\begin{abstract}
Learning the influence graph $G$ of a high-dimensional Markov process is central to many application domains, including social networks, neuroscience, and financial risk analysis. However, in many of these applications, future states of the process are occasionally and unpredictably influenced by a distant past state, thus destroying the Markovianity. To study this practical issue, we propose the past influence model (PIM), which captures the occasional "random resets to past" by modifying the Markovian dynamics in \cite{bagewadi2024learninginfluencegraphhighdimensional},  which, in turn, is a non-linear generalization of the dynamics studied in \cite{8345299}, \cite{6736512}. The recursive greedy algorithm proposed in this paper recovers any bounded degree $G$ when the number of ``jumps back in time" is order-wise smaller than the total number of samples, and the algorithm does not require memory.
\end{abstract}

\begin{IEEEkeywords}
Non-Markov Learning and Graph Learning.
\end{IEEEkeywords}

\section{Introduction}
Learning the underlying influence (causal) graph $G$ of a high-dimensional multivariate stochastic process has applications in multiple domains, including online social networks, neural and biological signal processing, and financial risk analysis \cite{quinn2011estimating,degroot1974reaching,ravazzi2021ergodic,hall2019learning}. The existing literature focuses on learning the influence graphs of Markov stochastic processes. Motivated by important phenomena in social networks and financial risk analysis, this paper studies the problem of learning the influence graphs of a class of stochastic processes whose distant past unpredictably influences their future.

In a multivariate Markov dynamic with an influence graph, the next state of a process variable depends on the current states of process variables that have directed edges to it. We study the following modification of such a Markov process. The future states depend on the current states with probability $p$, and with probability $1-p$ they depend on the states $d$ time-steps back. Intuitively, it is a process in which a Markovian evolution is randomly reset to a distant past state. The task is to recover the influence graph from samples of the process observed over $T$ time-steps, without knowing the instances when the “random resets to the past” occurred.

This phenomenon is observed in many domains, where the distant past occasionally influences the future, in addition to the present. For instance, in social networks, users may sometimes form opinions based on older interactions rather than the latest activities. In many social networks, agents express their opinions in a binary form (likes/dislikes), which in turn influences the beliefs of others within the network. Occasionally, opinions expressed in the distant past resurface and shape the opinions of others. Similarly, in financial markets, sudden shocks or crises often resurface after long intervals, influencing present-day risk perception and investment behavior. 

The social network model proposed in this paper addresses the situation in which the entire network is influenced by its past, rather than just a subset of participants. We also model the reactions of the network participants with a binary alphabet, though the framework can be easily extended to problems involving larger observable symbol sets. 

Though analytical guarantees for learning processes with random resets to the past have not been studied before to our best knowledge, a large body of work has addressed the problem of learning i.i.d. graphical model structures. Early work by Chow and Liu \cite{chow1968dependence} introduced tree-structured approximations for discrete distributions, and subsequent efforts extended these ideas to bounded tree-width Markov networks \cite{karger2001markov}. In high-dimensional settings, algorithms based on sparsity have been popular, including $\ell_{1}$-regularized logistic regression for Ising models \cite{ravikumar2010ising} and group-sparse regularization techniques \cite{jalali2011group}. Other approaches exploit structural properties: Bresler et al. \cite{bresler2008markov} proposed reconstruction methods for Markov random fields, while Wu et al. \cite{wu2013loosely} investigated loosely connected graphs. Greedy methods have also proven effective for scalable structure recovery \cite{netrapalli2010greedy}. And, more recently \cite{DaskalakisKandirosYao2025} provided algorithms for learning directed Gaussian graphical models with sample complexity. 

The algorithm proposed in this paper is for learning processes with random resets to the past, which is a generalization of the \textbf{RecGreedy} algorithm introduced in \cite{7097023} for i.i.d. samples. Recently, a similar approach was adopted in \cite{bagewadi2024learninginfluencegraphhighdimensional} to learn Markovian processes without random resets to the past, but with a finite memory.

The paper is organized as follows: In Section~\ref{sec: model} we introduce the \textbf{Past Influence Model}, and describe how it evolves over discrete time steps. In Section~\ref{sec: ps} we define the learning problem, followed by Section~\ref{sec: alg}, where we propose the \textbf{PIMRecGreedy} algorithm, explaining how it works and why it is more effective than simply considering a Markov process with a large enough memory, along with its sample complexity theorem. Section~\ref{sec: sims} presents the performance of \textbf{PIMRecGreedy} across varying sample sizes. 

\section{The Social Network Model} \label{sec: model}
\subsection{Modeling}

Influence is modeled by a directed graph $G = (V, E)$, where nodes represent agents and directed edges represent influence. A directed edge $(u, v) \in E$ means $u$ influences $v$, and self-influence is modeled by $(v, v) \in E$. The neighborhood of $v$ is $\mathcal{N}_v := \{u \neq v : (u, v) \in E\}$.

Time evolves in discrete steps. At each \(t\), node \(v\) generates \(M_v(t) \sim \min(\text{Poisson}(\mu_v(X_v(t))), \overline{M})+1\) independent Bernoulli samples with success probability \(X_v(t)\), where \(\mu_v:[0,1]\to\mathbb{R}\) is \(L\)-Lipschitz and \(\overline{M}\) is a constant cap. Out of these samples, \(N_v(t)\) are ones.  

Each directed edge \((u,v)\) has weight \(a_{uv}\in(0,1)\), with \(\sum_u a_{uv}=1\). The internal parameter \(X_v(t)\) evolves via Eqn.~(\ref{eq: node_param}) through three components: neighborhood observations \(\mathcal{N}_v\), intrinsic bias \(l_v\), and random fluctuation \(Z_v(t)\in[0,1]\) with mean \(0<\bar{z}_v<1\) and parameter \(\beta\in(0,1)\). \(C(t)\sim \mathrm{Ber}(p)\) models the random reset to the past $(t-d)$. The coefficient \(\alpha_v\) measures \(v\)’s openness to external influence, trading off social input against internal dynamics. We call this the \textbf{PIM (Past Influence Model)}. We sometimes refer to the time-reset as occurrence of Tail (in a coin toss perspective).

\begin{align}
\label{eq: node_param}
\nonumber
    &X_v(t+1) = (1-\alpha_v)\left[ (1 - \beta)Z_v(t) + \beta l_v\right] \\ & + \alpha_v \sum_{u \in \mathcal{N_v} \cup v} \alpha_{uv}\left[C(t)\frac{N_u(t)}{M_u(t)} + (1-C(t))\frac{N_u(t-d)}{M_u(t-d)} \right]
\end{align}

\subsection{Some Mathematical Structures}

Given the observations $\{(N_v(t), M_v(t)) : v \in V\}_{t=0}^{T-1}$, we define the variable $Y_v(t) := \frac{N_v(t)}{M_v(t)}$. We also define:

\begin{itemize}
    \item $Y_Q(t) := \{Y_v(t) : v \in Q \subseteq V\}$
    \item $Y_{u,Q}(t) := \{Y_v(t) : v \in Q \cup \{u\}, Q \subseteq V\}$
    \item $Y(t) := \{Y_v(t) : v \in V\}$
\end{itemize}
\vspace{3mm}
If there was no coin toss in Eqn.~(\ref{eq: node_param}) the process $Y(t)$ becomes a normal Markov chain, $\{W(t)\}$. The directed conditional entropy of node $v$ given a set $Q$ is defined as:
\[
-\sum_{Y_{Q,v}(t)} \sum_{Y_v(t+1)} \mathbf{P}(Y_v(t+1),Y_{Q,v}(t)) \log \mathbf{P}(Y_v(t+1)|Y_{Q,v}(t)),
\]
for some $Q \subset V$ and $v \not\in Q$. In a stationary setting, this quantity becomes time-invariant and is denoted as $H(v_+|v,Q)$. This can be computed by first calculating the joint entropies $H(v_+,v,Q)=
-\sum \mathbf{P}(Y_v(t+1),Y_{Q,v}(t)) \log \mathbf{P}(Y_v(t+1),Y_{Q,v}(t))$ and $H(v,Q)$ and $H(v,Q) = -\sum_{Y_{Q,v}(t)}  \mathbf{P}(Y_{Q,v}(t)) \log \mathbf{P}(Y_{Q,v}(t))$, in the former, the summation is over $Y_{Q,v}(t)$ and $Y_v(t+1)$.

In practice, the conditional entropy \(H(v_+|v,Q)\) cannot be computed directly due to unknown distributions. Instead, we estimate it from data and denote the estimate by \(\hat{H}(v_+|v,Q)\). Specifically, we compute plug-in estimates of the joint entropies \(\hat{H}(v_+,v,Q)\) and \(\hat{H}(v,Q)\) using the empirical distributions \(\hat{p}(y_v,y_Q)\) and \(\hat{p}(y_{v_+},y_v,y_Q)\), and then take their difference.  

\subsection{Probabilistically teleporting to the past}
\label{sec:PIM_intro}
 At each discrete time step \(t\), the PIM, according to Eqn.~(\ref{eq: node_param}), performs a Bernoulli trial with probability of success \(p\). If the outcome is a tail, the model \emph{ignores} the most recent \(d\) samples and instead evolves based on the samples from time \(t-d\). Hence, the model effectively follows the distribution from time \(t-d+1\), which we refer to as a \emph{reset to the past}. In this way, the true state of \(Y(t+1)\) aligns with the true state at time \(t-d+1\).  

\begin{align}
\label{sq: obs}
    &y(1), y(2), .., y(d+1), 
    \underset{\textbf{\textcolor{red}{Tail}}}{y^{(2)}(2)}, 
    \underset{\textbf{\textcolor{red}{Tail}}}{y^{(2)}(3)}, 
    .., y^{(2)}(d+1), \\ \nonumber
    &y(d+2), 
    \underset{\textbf{\textcolor{red}{Tail}}}{y^{(3)}(3)}, 
    .., 
    \underset{\textbf{\textcolor{red}{Tail}}}{y^{(3)}(d+2)}, 
    y(d+3), 
    \underset{\textbf{\textcolor{red}{Tail}}}{y^{(4)}(4)}, ..
\end{align}

In the above run of \(Y(t)\), the model forgets the recent past at the \((d+2)^{\text{th}}\) time step. As a result, the sample \(Y^{(2)}(2)\) is drawn from the same distribution as \(Y(2)\); similarly, \(Y^{(3)}(3)\) is distributed identically to both \(Y^{(2)}(3)\) and \(Y(3)\).  

The observations (\ref{sq: obs}) are then processed as in~(\ref{eq:PIM_pairs}) to estimate probabilities. Since the PIM does not know the positions of the tails, mismatched samples (e.g., \(y^{(2)}(2)\) generated from \(y(1)\)) are also erroneously included in the estimation. Nevertheless, with sufficiently large sample size (cf. Theorem~\ref{thm:2}), the estimation error remains small enough to reliably separate neighbors from non-neighbors.

\begin{align}
\label{eq:PIM_pairs}
   &[y(1),y(2)], .., 
   \underset{\textbf{\textcolor{red}{Tail}}}{[y(d+1),y^{(2)}(2)]}, 
   \underset{\textbf{\textcolor{red}{Tail}}}{[y^{(2)}(2),y^{(2)}(3)]}, .. \\ \nonumber
   &[y^{(2)}(d+1),y(d+2)], 
   \underset{\textbf{\textcolor{red}{Tail}}}{[y(d+2),y^{(3)}(3)]}, 
   [y^{(3)}(3),y^{(3)}(4)], .. \\ \nonumber
   &\underset{\textbf{\textcolor{red}{Tail}}}{[y^{(2)}(d+1),y^{(3)}(d+2)]}, .., 
   \underset{\textbf{\textcolor{red}{Tail}}}{[y(d+3),y^{(4)}(4)]}, ..
\end{align}

\section{The Learning Problem} \label{sec: ps}
The central problem in this paper is to learn the underlying influence graph based on the observed variables over a finite time window $[0, T-1]$. That is, given the samples$\{(N_v(t),{M}_v(t)): v \in V\}_{t=0}^{T-1}$  learn the neighborhood $\mathcal{N}_v$ for all nodes $v \in V$.

\section{Algorithm to Recover the Influence Graph} \label{sec: alg}
In this section, we describe a greedy approach to recover the underlying directed graph $G$ using the observation sequence $\{(N_v(t), M_v(t)): v \in V\}_{t=0}^{T-1}$. This procedure is a variant of the RecGreedy algorithm proposed for learning graphical models with i.i.d.\ samples~\cite{7097023}. Our primary technical contribution lies in establishing a sample complexity guarantee for Markovian dynamics that probabilistically gets reset to the past. Specifically, we demonstrate that the algorithm can recover the true graph with high probability using data collected over $\mathcal{O}(d^2 + d\log |V|)$ time steps.

Algorithm~1 is similar in spirit to the algorithm proposed in~\cite{bagewadi2024learninginfluencegraphhighdimensional}, but that relies on estimating entropies of the form $\Hhat(Y_{v+} \mid Y^m_{v,Q})$, where 
\[
Y^m_{v,Q}(t) = \{Y_{v,Q}(i): i \in [t-m,t]\},
\]
where $m$ is the Markov chain's memory. For higher values of $d$, following the approach of~\cite{bagewadi2024learninginfluencegraphhighdimensional} with $m=d$ becomes computationally very challenging, as probability estimation is required over large conditioning sets. By contrast, the PIMRecGreedy algorithm presented here requires no memory. When the sample size $T$ satisfies Theorem~\ref{thm:2}, the algorithm recovers the graph with a very high probability.

The algorithm relies on estimates of directed conditional entropy, $\Hhat(v_+ \mid v,Q)$, for various $Q \subset V$. For each node $v$, it begins with empty sets $\hat{U}(v)$ and $\hat{T}(v)$, representing intermediate and final neighborhood estimates. At each iteration, the algorithm selects the node $u$ that maximizes the reduction in directed conditional entropy, i.e.,
\[
\Hhat(v_+ \mid v,\hat{U}(v)) - \Hhat(v_+ \mid v,\hat{U}(v),u),
\]
and includes $u$ in $\hat{U}(v)$ if the reduction exceeds a fixed threshold. This process continues until no further node yields a reduction exceeding the threshold.

When the threshold is set to $\kappa/2$, Lemma~\ref{lem:tau_modified} guarantees that the last node added to $\hat{U}(v)$ is always a true neighbor of $v$. Only this last added node is included in the final estimate $\hat{T}(v)$. The procedure is then repeated, initializing with $\hat{U}(v)=\hat{T}(v)$. The process terminates once $\hat{T}(v)$ matches the true neighborhood, as no additional nodes satisfy the threshold condition.

The algorithm's sample complexity is largely determined by the dimension of the probability distributions that need to be estimated from data. Since it proceeds from smaller subsets P and stops when no further decrease in entropy is observed, the dimension of the involved distributions remains bounded.

To analyze the sample complexity, we use (Lemma~5,~\cite{bagewadi2024learninginfluencegraphhighdimensional}),
\[
\Pmax \leq \frac{2 \log \!\left( \tfrac{ \bar{M} \left( \bar{M} + 1 \right) }{2} + 2 \right) }{ \kappa } + 1
\]
to denote the largest cardinality of subset $P$ required \footnote{The bound is obtained by exploiting the fact that the inner loop of the algorithm reduces the entropy by at least $\kappa/2$ upon adding a new node to $\hat{U}(v)$.}. Let $\chi$ denote the support of the random variable $Y_v(t)$, and define $|\xi| := |\chi|^{(2+ \Pmax)}$ to be the maximum support size of the joint distributions to be estimated. Together with properties of the Markov process, this yields a logarithmic sample complexity, formally established in Theorem~1.

Let $p_g$ denote the probability distribution of the observed samples given the instants of tails and the observations are ordered (in time) accordingly, we shall call this the post-processing distribution. This distribution is used only for the sample complexity analysis. For $M_v(t)=1$. it can be shown that $\sum_{w_{v_+}} |p_g(y_{v_+}|y_{v,Q}) - p_g(y_{v_+}|y_{v,Q,u})|$ is bounded away from zero, for  $ \forall v \in V$, any values of $y_v,y_Q$, when $u \in \mathcal{N}_v$, and $Q \subset V$ such that $u \not\in Q$. For the general case, due to certain technicalities, we make the following assumption.

\begin{Assumption}
Consider any $v \in V$, $u \in \mathcal{N}_v$, and $Q \subset V$ such that $u \not\in Q$. Then for any values $y_v$ and $y_Q$, there exists a value $y_u$ such that for some constant $\kappa' > 0$,
\[
\sum_{y_{v_+}} |p_g(y_{v_+}|y_{v,Q}) - p_g(y_{v_+}|y_{v,Q,u})| >  \sqrt{\kappa'}
\]
\end{Assumption}   
\begin{Theorem}
\label{thm:2}
Let $\tilde{A}$ be the matrix whose $ij^{th}$ element is $\alpha_i \alpha_{ij}$, and $\rho(\tilde{A})$ is the largest eigenvalue of $\tilde{A}$. Also, let $\bar{\mu} := \max_{v} \sup_{x \in [0,1]} \mu_v(x).$ The PIMRecGreedy($\kappa$) algorithm recovers the true influence graph with probability at least $1-\gamma$ if the sample size satisfies the following constraint:
\begin{align*}
  &  T  \geq\max \Biggl\{ 
    \log \!\left( \frac{c + 2 \cdot |V|^{(P_{\max} + 1)} \, |\xi|}{\gamma} \right) 
    \frac{12 \delta'\beta_1}{c_1^2}, \;  \\
  & (d^2 - d + 2) + \frac{(1 - 2(\bar{\mu} + L) |\rho(\tilde{A})|)\delta^2}{2 (1 + 2(\bar{\mu} + L) |\rho(\tilde{A})|) |\xi|^2(d+1)^{-1}} 
  \\[6pt] 
  & \quad \times \log \!\left( \frac{c + 2 \cdot |V|^{(P_{\max} + 1)} \, |\xi|}{\gamma} \right)
    \Biggr\}
\end{align*}

  \text{Where, }$c>0, |\xi|:=|\chi|^{(2+\Pmax)}$, $|\chi| \leq (\frac{\Mbar(\Mbar+1)}{2}+2)$, $\beta_1\delta' - 4(T-1)^{\alpha - 1} > c_1$ ,$\delta. \log\left( \frac{|\xi|}{\delta}\right) \leq \frac{\epsilon}{4}$, $\delta'. \log\frac{|\xi|}{\delta'} \leq \frac{\epsilon'}{4}$ and $(1-p) = \frac{(T-1)^{\alpha}}{\beta_1(T-d-1)}$, for some $\alpha < 1, \text{ and } 0<\beta_1 <1$, $2(\bar{\mu} + L) |\rho(\tilde{A})|< 1$, and Assumption 1 holds. 

\end{Theorem}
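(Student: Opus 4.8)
The plan is to reduce the correctness of graph recovery to two quantitative requirements on the estimated directed conditional entropies: a \emph{population separation} ensuring that true neighbors produce a large entropy drop while non-neighbors produce essentially none, and a \emph{concentration} statement ensuring that the plug-in estimates $\Hhat(v_+\mid v,Q)$ stay within a $\kappa/4$ tolerance of their population values uniformly over all conditioning sets the greedy procedure can encounter. Given these two pieces, the threshold $\kappa/2$ in PIMRecGreedy, together with Lemma~\ref{lem:tau_modified}, forces every node added to $\hat{T}(v)$ to be a true neighbor and forces the loop to terminate only when $\hat{T}(v)=\mathcal{N}_v$; a union bound over nodes and conditioning sets then yields the $1-\gamma$ guarantee. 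The two branches of the $\max$ in the statement correspond exactly to these two requirements, the first controlling the plug-in concentration margin $c_1$ and the second controlling the Markov mixing and the reset depth $d$.

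For the population separation I would work with the post-processing distribution $p_g$, for which Assumption~1 supplies a total-variation gap of at least $\sqrt{\kappa'}$ between $p_g(y_{v_+}\mid y_{v,Q})$ and $p_g(y_{v_+}\mid y_{v,Q,u})$ whenever $u\in\mathcal{N}_v$ and $u\notin Q$. I would convert this gap into a lower bound on the conditional mutual information $I(v_+;u\mid v,Q)=H(v_+\mid v,Q)-H(v_+\mid v,Q,u)$ via a Pinsker-type inequality (combined with a lower bound on the probability of the separating value $y_u$), obtaining a drop of at least $\kappa \asymp \kappa'$ for neighbors; for a non-neighbor $u$, once $Q$ contains $\mathcal{N}_v$ the conditional independence of $v_+$ and $u$ given $(v,Q)$ makes the drop vanish at the population level. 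This is what makes the $\kappa/2$ threshold a valid separator.

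The concentration step is where the non-i.i.d. structure enters, and I would split the error as
\[
|\Hhat_{\mathrm{raw}}-H_{p_g}|\le |\Hhat_{\mathrm{raw}}-H_{p_{\mathrm{raw}}}|+|H_{p_{\mathrm{raw}}}-H_{p_g}|,
\]
where $p_{\mathrm{raw}}$ is the law of the actually observed (tail-contaminated) pairs. The first term I would bound by establishing that the empirical joint law $\phat$ over the support of size $|\xi|=|\chi|^{(2+\Pmax)}$ concentrates around the stationary law of the Markov chain $\{W(t)\}$; here the contraction hypothesis $2(\bar{\mu}+L)|\rho(\tilde{A})|<1$ guarantees geometric mixing, so a Hoeffding-type inequality for dependent sequences gives exponential concentration with an effective sample size governed by $\rho(\tilde{A})$ and a burn-in of order $d^2-d+2$ needed to cover one full reset cycle. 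Passing from distribution error $\delta$ (resp. $\delta'$) to entropy error uses the continuity bounds $\delta\log(|\xi|/\delta)\le\epsilon/4$ and $\delta'\log(|\xi|/\delta')\le\epsilon'/4$. The second term is the bias from the tails: since the fraction of mismatched pairs in~(\ref{eq:PIM_pairs}) scales like $1-p=(T-1)^{\alpha}/(\beta_1(T-d-1))\sim (T-1)^{\alpha-1}$, this bias is $O((T-1)^{\alpha-1})$ and is absorbed into the margin through the condition $\beta_1\delta'-4(T-1)^{\alpha-1}>c_1$, leaving a strictly positive effective margin $c_1$.

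The main obstacle I anticipate is this third step --- proving the Hoeffding-type concentration for the tail-corrupted Markov chain while simultaneously controlling the tail-induced bias and keeping the net margin $c_1$ bounded away from zero. Unlike the i.i.d. RecGreedy analysis of~\cite{7097023}, the samples are dependent and the conditioning pairs in~(\ref{eq:PIM_pairs}) include erroneous ``tail'' pairs whose count is itself random; handling both the mixing and the contamination within a single uniform deviation bound, and then verifying that the resulting $T$ is exactly the stated maximum of the two expressions after the union bound over the at most $|V|^{(\Pmax+1)}\,|\xi|$ joint distributions, is the technical heart of the argument. The remaining algebra --- propagating $\delta,\delta',\epsilon,\epsilon'$ through the entropy-continuity bounds and collecting constants --- is routine.
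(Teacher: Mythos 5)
Your high-level skeleton (separation via Assumption~1, uniform entropy concentration, union bound over the at most $|V|^{(\Pmax+1)}|\xi|$ joint distributions, two branches of the max) matches the paper, and your Pinsker-style treatment of the separation is consistent in spirit with Lemmas~\ref{lem:CondEntMarkov} and~\ref{lem:tau_modified}. However, your concentration step has a genuine gap, exactly at the point you flag as the technical heart. You decompose the error as $|\Hhat_{\mathrm{raw}}-H_{p_{\mathrm{raw}}}|+|H_{p_{\mathrm{raw}}}-H_{p_g}|$, which requires the empirical law of the \emph{tail-contaminated} pairs to concentrate around its own population law. But the contaminated pair process is not an order-one Markov chain: the resets make $(Y(t),Y(t+1))$ depend on states $d$ steps back, so the relevant object is the lifted $(d{+}1)$-tuple chain, and the hypothesis $2(\bar{\mu}+L)|\rho(\tilde{A})|<1$ only bounds the second eigenvalue $\lambda^*$ of the reset-free chain $\{W(t)\}$ (Lemma~\ref{lem: w_bound}). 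No mixing estimate for the contaminated chain follows from the stated assumptions, so the ``Hoeffding-type inequality for dependent sequences'' you invoke has nothing to act on; indeed your own write-up slips here, asserting that $\phat$ concentrates around the stationary law of $\{W(t)\}$, which is inconsistent with your decomposition (the raw empirical law is biased by the tails and would have to concentrate around $p_{\mathrm{raw}}$, not around $\pi$). Moreover, since you treat the bias $|H_{p_{\mathrm{raw}}}-H_{p_g}|$ deterministically as $O((T-1)^{\alpha-1})$, your argument never produces the first branch $T\ge \frac{12\delta'\beta_1}{c_1^2}\log(\cdot)$ of the theorem, which in the paper arises from a genuinely probabilistic (Chernoff) estimate.

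The paper sidesteps the contaminated-chain mixing problem with a different decomposition, through the \emph{genie} empirical law $\phat_g$ built from pairs re-formed with knowledge of the reset instants (Eqn.~(\ref{eq:genie_pairs})): it bounds $\|\phat-\phat_g\|_1$ and $\|\phat_g-p_g\|_1$ separately. The first term is controlled \emph{pathwise}: $\phat$ and $\phat_g$ are formed from the same $T-1$ pairs and differ only at tail positions, so $\|\phat-\phat_g\|_1\le 2\,(\#\mathrm{tails})/(T-1)$, where $\#\mathrm{tails}$ is Binomial with mean $(1-p)(T-d-1)=(T-1)^{\alpha}/\beta_1$; a Chernoff bound plus the margin condition keeping $\delta'-4(T-1)^{\alpha-1}$ above $c_1$ yields $\exp\!\left(-Tc_1^2/(12\delta'\beta_1)\right)$, i.e., the first branch. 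The second term involves only genie-ordered data, which in the worst case (all tosses tails) reduces, after keeping one pair out of every $d{+}1$, to an honest run of $\{W(t)\}$ of effective length $T_{\min}=(T+d^2-1)/(d+1)$ (see the Example in the Appendix); Lemma~\ref{lem: w_bound} then applies verbatim, and this subsampling — not a burn-in — is what produces the $(d^2-d+2)$ offset and the $(d{+}1)$ factor in the second branch. To repair your proof you would either need to establish a spectral gap for the lifted reset chain (not available from the hypotheses), or adopt the paper's genie/pathwise-counting route.
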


\begin{algorithm}
\caption{PIMRecGreedy($\kappa$) Algorithm}\label{alg_recgreedy}
 \hspace*{\algorithmicindent} \textbf{Input}: $\{N_v(t) : v \in V\}_{t=0}^{T-1}, \{M_v(t) : v \in V\}_{t=0}^{T-1}, $ \\
 \hspace*{\algorithmicindent} \textbf{Output}: ${\hat{T}(v):  \forall v \in V}$
\begin{algorithmic}[1]
 \For{$v \in V$}
    \State $\hat{T}(v), \leftarrow \phi$
    \Repeat
        \State $\hat{U}(v) \leftarrow \hat{T}(v)$
        \State $\text{LastNode} \leftarrow \phi$
        
            \Repeat
            \State{\vspace{-0.7cm}
            \begin{align*}
            u &= \arg \max \limits_{k \in V \setminus \hat{U}(v)} \hat{H}(v_+|v,\hat{U}(v)) \\ &- \hat{H}(v_+|v,\hat{U}(v),k) 
            \end{align*}}
            \State {$\delta'_u =  \hat{H}(v_+|v,\hat{U}(v)) - \hat{H}(v_+|v,\hat{U}(v),u) $}
              \If{$\delta'_u > \kappa/2$}             
            
                    \State $\hat{U}(v) \leftarrow \hat{U}(v) \cup \{u\}$
                    \State $\text{LastNode} \leftarrow u$
                
              \Else
                \State $\hat{T}(v) \leftarrow \hat{T}(v) \cup \text{LastNode}$
              \EndIf 
            
        \Until New node is not added to $\hat{U}(v)$
    \Until New node is not added to $\hat{T}(v)$
\EndFor
            
\end{algorithmic}
\end{algorithm}

\section{Simulation results} \label{sec: sims}
The data are generated using Eqn.~(\ref{eq: node_param}) with parameters 
$l_v=0.167$, $\beta=0.75$, $\alpha_v=0.8$, $\beta_1=0.75$, $\alpha=0.5$, 
$|V|=5 \text{ and } 10$, $\mu_v(x)=0.4x$, and $\alpha_{uv}=1/|\mathcal{N}_v|$ for 
$u \in \mathcal{N}_v$. Performance of \textbf{PIMRecGreedy} over ring 
and line graphs are plotted in Fig.~\ref{fig:1} for $d=5$ and 
$\overline{M}=1$. The simulations are repeated for a slightly harder case 
with $d=10$ and $\overline{M}=2$, and the results are shown in 
Fig.~\ref{fig:2}. The simulation results for the graphs presented in Fig. 3 are shown in Fig. 4.

\subsection*{Choice of Parameter $\kappa$}\label{sec: cross}
The threshold parameter $\kappa$ for \textbf{PIMRecGreedy} is selected via cross-validation on a labeled/held-out dataset, by choosing the value that maximizes a suitable performance metric (here, the probability of full graph recovery). Similar parameter tuning is required in other graphical model learning methods, such as the regularization parameter $\lambda$ in $RWL$ \cite{ravikumar2010ising} and the local separator size $\eta$ in the $CVDT$ algorithm \cite{anandkumar2012ising}. Cross-validation results on ring and line graphs with 10 nodes are shown in Fig. 5, for $d=10 \text{ and }\overline{M}=2$.

\begin{figure}
    \centering
    \includegraphics[width=0.6\linewidth]{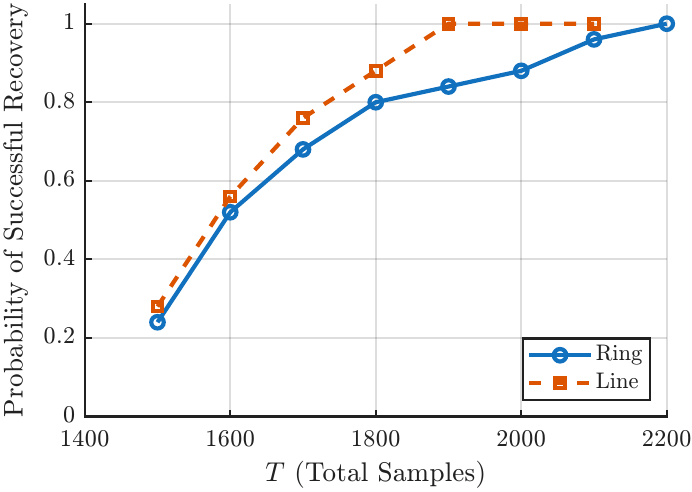}
    \caption{$|V| = 5, d = 5$, and $\overline{M} = 1$}
    \label{fig:1}
\end{figure}

\begin{figure}
    \centering
    \includegraphics[width=0.6\linewidth]{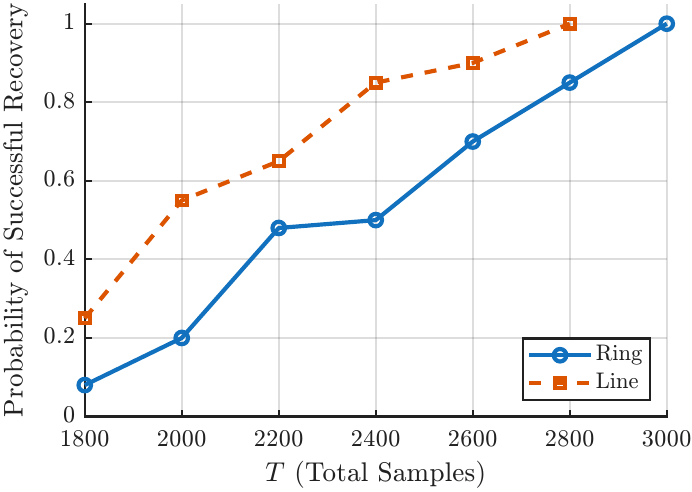}
    \caption{$|V| = 10, d = 10$, and $\overline{M} = 2$}
    \label{fig:2}
\end{figure}

\begin{figure}
    \centering
    \includegraphics[width=0.414\linewidth]{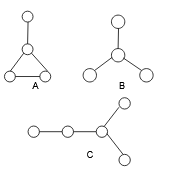}
    \caption{A) $K_3$ with a pendant, B) $S_3$, and C) Tree $|V| = 5$.}
    \label{fig:4}
\end{figure}

\begin{figure}
    \centering
    \includegraphics[width=0.6\linewidth]{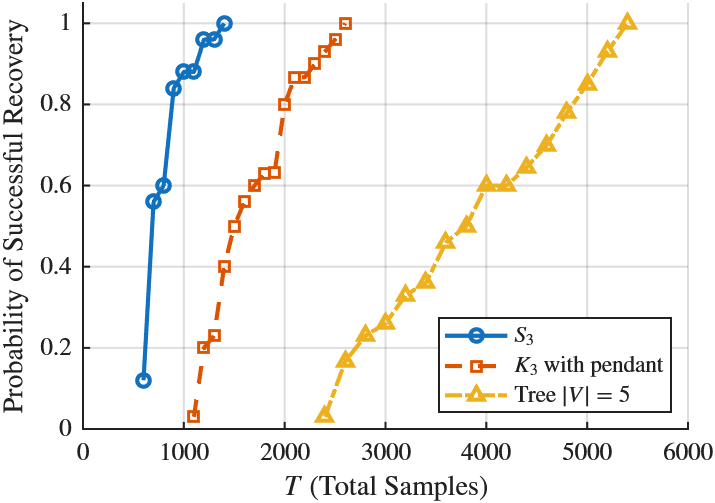}
    \caption{$d = 10$ and $\overline{M} = 1$}
    \label{fig:4}
\end{figure}

\begin{figure}
    \centering
    \includegraphics[width=0.6\linewidth]{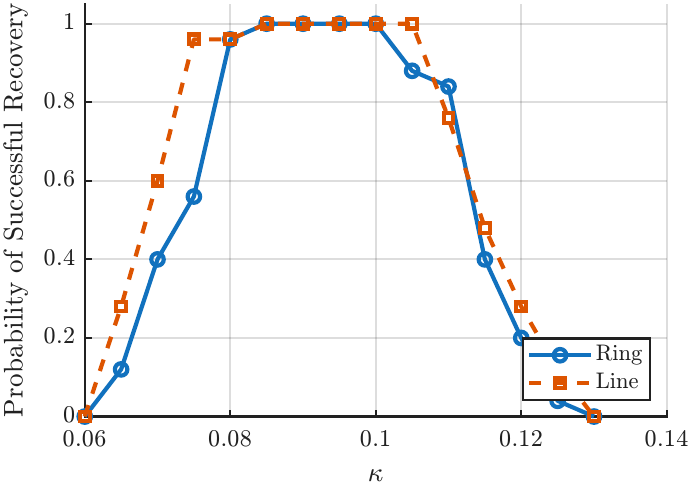}
    \caption{$|V| = 10, d = 10$, $T = 3500$, and $\overline{M} = 2$}
    \label{fig:4}
\end{figure}
 
\section{Conclusion}
In many applications, such as social networks and financial risk analysis, a distant past state can occasionally and randomly affect the future. Motivated by this,  we propose the past influence model (PIM)  by adapting a general Markovian model \cite{bagewadi2024learninginfluencegraphhighdimensional, 8345299} and propose an algorithm to learn the underlying influence graph of this process with a provable guarantee on accuracy and sample complexity. The algorithm has a certain threshold $\kappa$ (similar to regularization parameters), and we observe that, in experiments (Fig.~5), the algorithm's performance depends on this threshold. It was also observed that the influence graph
could be learned for constant values of $1 - p$ (Probability of
resetting), but its analysis is deferred for future work. This work is just the first step towards generalizing this model to capture the specific issues of social networks more effectively. It would be a challenging problem, but with direct applications.

\bibliographystyle{IEEEtran}        
\bibliography{main}

\appendix

For completeness, we will show how the data is processed under the knowledge of instants when resets occurred, which shall be used to obtain the probability estimates $p_g$ introduced in Section~\ref{sec: alg}. The sequence in (\ref{eq:PIM_pairs}) will be paired as following (\ref{eq:genie_pairs}) to obtain $p_g$ estimates.
\begin{align}
\label{eq:genie_pairs}
   &[y(1),y(2)],.....,\underset{\textbf{\textcolor{red}{Tail}}} {[y(1),y^{(2)}(2)]}, \underset{\textbf{\textcolor{red}{Tail}}}{[y(2),y^{(2)}(3)]},..\\ \nonumber
   &[y^{(2)}(d+1),y(d+2)], \underset{\textbf{\textcolor{red}{Tail}}}{[y^{(2)}(2),y^{(3)}(3)]},[y^{(3)}(3),y^{(3)}(4)],..,\\ \nonumber
   &\underset{\textbf{\textcolor{red}{Tail}}}{[y^{(2)}(d+1),y^{(3)}(d+2)]},.., \underset{\textbf{\textcolor{red}{Tail}}}{[y^{(3)}(3),y^{(4)}(4)]},....
\end{align}

The following Lemma follows directly from the Markovian properties of the post-processing distribution under knowledge of resets.
\begin{Lemma}
\label{lem:CondEntMarkov}
For any node $v$ and any node $u \not\in \mathcal{N}_v$, if $Q \subset V$ is such that $\mathcal{N}_v \subseteq Q$, then:
\[
H_g(v_+|v,Q,u) = H_g(v_+|v,Q).
\]
Moreover, under Assumption~1, for any $Q \subset V$, if $u \in \mathcal{N}_v$ and $u \not\in Q$, then, for some constant $\kappa > 0$, 
\[
H_g(v_+|v,Q) - H_g(v_+|v,Q,u) > \kappa,
\]
Where $H_g(.|.)$ are the entropy obtained from the $p_g$ distribution.
\end{Lemma}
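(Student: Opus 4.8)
The plan is to derive both statements from the Markovian structure of the post-processing (genie) distribution $p_g$, under which the pairing in~(\ref{eq:genie_pairs}) aligns every sample $Y_v(t+1)$ with the state that actually generated it. The first identity is a conditional-independence (screening) statement, while the second converts the entropy gap into a conditional mutual information and lower-bounds it via Pinsker's inequality combined with Assumption~1.

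For the first part, I would first record that under $p_g$ the evolution~(\ref{eq: node_param}) makes $X_v(t+1)$, and hence the conditional law of $Y_{v_+}$, depend on the generating state only through $\{Y_w : w \in \mathcal{N}_v \cup \{v\}\}$, together with the internal bias $l_v$ and the independent fluctuation $Z_v$. This yields the screening relation
\[
p_g(y_{v_+}\mid y_V) = p_g(y_{v_+}\mid y_v, y_{\mathcal{N}_v}),
\]
i.e.\ $Y_{v_+} \perp Y_{V\setminus(\mathcal{N}_v\cup\{v\})} \mid (Y_v, Y_{\mathcal{N}_v})$. Since $\mathcal{N}_v \subseteq Q$ and $u \in V\setminus(\mathcal{N}_v\cup\{v\})$, both conditioning sets $(v,Q)$ and $(v,Q,u)$ contain $\mathcal{N}_v\cup\{v\}$ and lie inside $V$, so $p_g(y_{v_+}\mid y_v,y_Q,y_u) = p_g(y_{v_+}\mid y_v,y_{\mathcal{N}_v}) = p_g(y_{v_+}\mid y_v,y_Q)$. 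Equality of the conditional laws gives equality of the conditional entropies, which is the claim.

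For the second part, I would write the entropy difference as the conditional mutual information $H_g(v_+|v,Q) - H_g(v_+|v,Q,u) = I_g(v_+; u \mid v,Q)$ and decompose it as
\[
\sum_{y_v,y_Q} p_g(y_v,y_Q) \sum_{y_u} p_g(y_u\mid y_v,y_Q)\, D_{\mathrm{KL}}\big(p_g(\cdot\mid y_v,y_Q,y_u)\,\|\,p_g(\cdot\mid y_v,y_Q)\big).
\]
Applying Pinsker's inequality bounds each inner KL term below by $\tfrac12$ of the squared $\ell_1$ distance between the two conditional laws of $v_+$. For each $(y_v,y_Q)$, Assumption~1 supplies a witness value $y_u^\star$ for which this $\ell_1$ distance exceeds $\sqrt{\kappa'}$, so the associated KL term is at least $\kappa'/2$. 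Retaining only that term and discarding the rest (all non-negative) leaves the lower bound $\tfrac{\kappa'}{2}\sum_{y_v,y_Q} p_g(y_v,y_Q,y_u^\star)$, a sum of joint probabilities that is strictly positive.

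The main obstacle is turning ``strictly positive'' into ``$>\kappa$'' for a fixed constant $\kappa$ uniformly over the admissible $Q$. This needs a uniform lower bound $p_{\min}>0$ on the joint probabilities $p_g(y_v,y_Q,y_u)$ over all configurations in the finite support $\chi$ and all $Q$ of size at most $\Pmax$. I would obtain it from the model structure: the fluctuation $Z_v$ with mean in $(0,1)$ keeps each $X_v(t)$ strictly inside $[0,1]$, so every $Y_v$ places positive mass on each element of its bounded support, and since $|\xi| = |\chi|^{(2+\Pmax)}$ is finite, each configuration of the conditioning block carries probability at least some $p_{\min}$ depending only on $\bar{M}$ and $\Pmax$. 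Keeping a single term then gives $I_g(v_+; u \mid v,Q) \geq p_{\min}\,\kappa'/2 =: \kappa > 0$. The delicate point is verifying that this positive-mass claim survives the convex combination and the Poisson--Bernoulli sampling, so that $p_{\min}$ does not degenerate and $\kappa$ remains a genuine instance-dependent constant.
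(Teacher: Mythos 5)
Your proof is correct, and it is considerably more explicit than what the paper provides: the paper does not actually prove Lemma~\ref{lem:CondEntMarkov}, but disposes of it with the single remark that it ``follows directly from the Markovian properties of the post-processing distribution under knowledge of resets.'' Your part-one argument (screening of $Y_{v_+}$ by $Y_{\mathcal{N}_v\cup\{v\}}$ under the genie pairing, then marginalizing out the nodes outside the conditioning set) is exactly the Markov property the authors are invoking. For part two, your chain of identities---entropy gap equals conditional mutual information, conditional mutual information equals an expected KL divergence, Pinsker's inequality plus the witness value from Assumption~1---is almost certainly the intended mechanism: Assumption~1 is stated with a $\sqrt{\kappa'}$ threshold precisely so that Pinsker yields a KL bound of $\kappa'/2$ at the witness. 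One remark: the step you flag as the ``main obstacle'' (promoting strict positivity to a uniform constant $\kappa$) is simpler than your model-structure argument suggests. Since $V$ and the alphabet $\chi$ are finite, there are only finitely many triples $(v,u,Q)$ and finitely many configurations $(y_v,y_Q)$; moreover, for the conditional $p_g(y_{v_+}\mid y_{v,Q,u})$ appearing in Assumption~1 to be well defined at the witness $y_u^\star$, the configuration $(y_v,y_Q,y_u^\star)$ must already carry positive $p_g$-mass. Hence each individual gap is strictly positive, and the minimum over the finitely many admissible triples serves as a valid $\kappa>0$; no per-configuration lower bound $p_{\min}$ derived from the dynamics (with its attendant unstated conditions such as $l_v\in(0,1)$ and $\alpha_v<1$) is required, although your construction, if completed, would have the merit of producing a more explicit, instance-independent constant.
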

 Intuitively, the above Lemma states that a node is in the neighborhood of $v$ if and only if adding it to the conditioning set $Q$ leads to a reduction in the directed conditional entropy $H_g(v_+|v,Q)$. The algorithms we present rely on this principle to identify neighborhoods. However, when $Q$ does not fully contain $\mathcal{N}_v$, Lemma~\ref{lem:CondEntMarkov} does not prevent the possibility of false positives—nodes being incorrectly identified as neighbors. The algorithm is designed to handle this issue by running nested loops leveraging Lemma~\ref{lem:tau_modified}.

Similar to $\phat, \hat{p}_g$ is the naive plugin estimate of the distribution $p_g$, and the entropy estimates $\hat{H}_g$ are obtained from $\hat{p}_g$. Drawing from the ideas in \cite{7097023}, \cite{bagewadi2024learninginfluencegraphhighdimensional} shows that if the estimation error in directed conditional entropy is kept sufficiently small, it is possible to select a threshold to reliably distinguish between neighbors and non-neighbors. This is formally stated in the following Lemma for our problem.

\begin{Lemma}
    \label{lem:tau_modified}
    Under Assumption~1, if for every node $v$ and any subset $P \subseteq V$, $|\Hhat(v_+|v,P) - \Hhat_g(v_+|v,P)| \leq \frac{\epsilon'}{4}, \text{  and  } |\Hhat_g(v_+|v,P) - H_g(v_+|v,P)| \leq \frac{\epsilon}{4}$. \\ 
    \\
    For all $u \in \mathcal{N}_v$, $v \in V$, and $Q \subset V \setminus\{u,v\}$,
    \begin{equation}
    \label{thresh1}
       \Hhat(v_+|v,Q) - \Hhat(v_+|v,Q,u) > \frac{\kappa}{2}
    \end{equation}
    
    and for all $w \notin \mathcal{N}_v$, with $Q \subset V \setminus\{u,v,w\}$ such that $\mathcal{N}_v \subseteq Q$,
    \begin{equation}
    \label{thresh2}
        \Hhat(v_+|v,Q) - \Hhat(v_+|v,Q,w) < \frac{\kappa}{2}
    \end{equation}
\end{Lemma}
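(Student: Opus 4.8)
The plan is to read \eqref{thresh1} and \eqref{thresh2} as two-sided margin statements. Lemma~\ref{lem:CondEntMarkov} fixes the \emph{true} conditional-entropy gaps under the post-processing distribution $p_g$, while the two hypotheses of the lemma control the drift of the empirical estimate $\Hhat$ from $H_g$. First I would fold the two hypotheses into a single uniform bound: by the triangle inequality, for every node $v$ and every $P\subseteq V$,
\[
|\Hhat(v_+|v,P) - H_g(v_+|v,P)| \;\le\; \tfrac{\epsilon'}{4} + \tfrac{\epsilon}{4},
\]
since the first hypothesis bounds $|\Hhat-\Hhat_g|$ by $\epsilon'/4$ and the second bounds $|\Hhat_g-H_g|$ by $\epsilon/4$. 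Consequently every empirical gap $\Hhat(v_+|v,Q) - \Hhat(v_+|v,Q,\cdot)$ differs from its $H_g$-counterpart by at most $(\epsilon+\epsilon')/2$, because each of the two conditioning sets contributes at most $(\epsilon+\epsilon')/4$.

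For the neighbor case \eqref{thresh1}, take $u\in\mathcal{N}_v$ and $Q\subset V\setminus\{u,v\}$, so $u\notin Q$. The second part of Lemma~\ref{lem:CondEntMarkov} (which invokes Assumption~1) gives $H_g(v_+|v,Q) - H_g(v_+|v,Q,u) > \kappa$. Applying the uniform bound at $P=Q$ and at $P=Q\cup\{u\}$ and subtracting,
\[
\Hhat(v_+|v,Q) - \Hhat(v_+|v,Q,u) \;>\; \kappa - \tfrac{\epsilon+\epsilon'}{2},
\]
which exceeds $\kappa/2$ as soon as $\epsilon+\epsilon' < \kappa$, establishing \eqref{thresh1}. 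For the non-neighbor case \eqref{thresh2}, take $w\notin\mathcal{N}_v$ with $\mathcal{N}_v\subseteq Q$ and $w\notin Q$. Now the \emph{first} part of Lemma~\ref{lem:CondEntMarkov} applies and gives the exact equality $H_g(v_+|v,Q,w)=H_g(v_+|v,Q)$, i.e.\ a true gap of $0$. The same two-sided application of the uniform bound then yields
\[
\Hhat(v_+|v,Q) - \Hhat(v_+|v,Q,w) \;\le\; \tfrac{\epsilon+\epsilon'}{2} \;<\; \tfrac{\kappa}{2},
\]
again under $\epsilon+\epsilon'<\kappa$, establishing \eqref{thresh2}.

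The analytic content of this lemma is deliberately light — it is a margin argument — so the real work lives in its hypotheses and in Lemma~\ref{lem:CondEntMarkov}. The delicate points I would verify are the following. First, the error budget is split symmetrically on both sides of the threshold $\kappa/2$, so I must check that the parameter choices feeding Theorem~\ref{thm:2} (through $\delta\log(|\xi|/\delta)\le\epsilon/4$ and $\delta'\log(|\xi|/\delta')\le\epsilon'/4$) actually enforce $\epsilon+\epsilon'<\kappa$; otherwise the neighbor and non-neighbor regimes overlap and no single threshold separates them. Second, the uniformity of the estimate over \emph{all} conditioning sets $P$ is essential, because PIMRecGreedy queries gaps for many different $Q$, and the high-probability event must hold simultaneously across them — this is where a union bound over the $|V|^{(P_{\max}+1)}|\xi|$ relevant distributions enters Theorem~\ref{thm:2}. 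Third, I would track the signs carefully so that the \emph{strict} inequalities, not merely the non-strict ones, survive. None of these is difficult, but each is exactly where a mishandled constant would collapse the neighbor/non-neighbor dichotomy that the algorithm's correctness rests on.
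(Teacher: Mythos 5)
Your proof is correct and takes essentially the same route as the paper's: fold the two hypotheses into a uniform sandwich bound on $|\Hhat - H_g|$, invoke Lemma~\ref{lem:CondEntMarkov}'s dichotomy (gap $>\kappa$ for neighbors, gap $=0$ for non-neighbors once $\mathcal{N}_v \subseteq Q$), and compare the resulting empirical gaps against the threshold $\kappa/2$. The only difference is that the paper fixes $\kappa/2 = \frac{\epsilon}{2}+\frac{\epsilon'}{2}$ exactly, whereas your condition $\epsilon+\epsilon' < \kappa$ is slightly more careful: it preserves the strict inequality in \eqref{thresh2}, which under the paper's equality choice technically comes out only as a non-strict bound.
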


\begin{proof}
    
Based on the conditions of Lemma~\ref{lem:tau_modified}, 
\[
\hat{H}_g(v_+|v,P) - \frac{\epsilon}{4} \leq \hat{H}(v_+|v,P) \leq \hat{H}_g(v_+|v,P) + \frac{\epsilon}{4}
\] 
and a similar entropy sandwich can be obtained for $\Hhat(v_+|v,P,u), \Hhat_g(v_+|v,P,u),$ and $\Hhat_g(v_+|v,P)$. Upon combining these inequalities: For all $u \in \mathcal{N}_v$, $v \in V$, and $Q \subset V$  $\Hhat(v_+|v,Q) - \Hhat(v_+|v,Q,u) \geq \kappa - \left( \frac{\epsilon}{2}+\frac{\epsilon'}{2} \right)$, upon having $\frac{\kappa}{2} = \left( \frac{\epsilon}{2}+\frac{\epsilon'}{2} \right)$, the Lemma is proved. A similar argument can be made for the second part as well.
\end{proof}

\vspace{3mm}

The proof of Theorem~1 has three main steps. First, Lemma~\ref{lem:7_infoth} shows that controlling the probability estimation error also controls the entropy estimation error—crucial for setting the threshold $\kappa$ to distinguish neighbors from non-neighbors, as in Lemma~\ref{lem:tau_modified}. Lemma~\ref{lem: w_bound} upper bounds the probability of estimation error between a Markov chain without any intermittent resets and its stationary distribution. Next, Lemma~\ref{lem:soltn}, the paper’s main technical result, bounds the probability estimation error for a Markov process that probabilistically forgets recent samples and specifies the minimum sample size needed to keep this error below a target level with probability at least $1 - \gamma$. Together, these results establish Theorem~1.

\begin{Lemma}
\label{lem:7_infoth}
The upper bound on the entropy estimation error
\[
\big| \hat{H}_g(v_+ \mid v, P) - H_g(v_+ \mid v, P) \big| < \frac{\epsilon}{4}
\]
holds if the following conditions are satisfied:
\[
\left\| \hat{p}_g(y_{v_+}, y_{v,P}) - p_g(y_{v_+}, y_{v,P}) \right\|_1 < \delta, 
\quad \forall v \in V, \; P \subset V,
\]
when $\delta \, \log\!\left(\tfrac{|\xi|}{\delta}\right) \leq \tfrac{\epsilon}{4}.$
Similarly, if $\delta' \, \log\!\left(\tfrac{|\chi|}{\delta'}\right) \leq \frac{\epsilon'}{4}$
\[
\left\| \hat{p}(y_{v_+}, y_{v,P}) - \hat{p}_g(y_{v_+}, y_{v,P}) \right\|_1 < \delta',
\quad \forall v \in V, \; P \subset V,
\]
then
\[
\big| \hat{H}(v_+ \mid v, P) - \hat{H}_g(v_+ \mid v, P) \big| < \epsilon'.
\]
\end{Lemma}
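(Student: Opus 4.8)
The plan is to reduce the statement to a single standard fact: the Shannon entropy functional is uniformly continuous in the $\ell_1$ (total-variation) metric, with modulus of continuity $\theta \mapsto \theta\log(N/\theta)$ on a finite alphabet of size $N$. Concretely, for two distributions $P,Q$ on an alphabet of size $N$ with $\|P-Q\|_1=\theta\le \tfrac12$, the classical bound $|H(P)-H(Q)|\le \theta\log(N/\theta)$ holds (e.g.\ Csiszár--Körner, or Cover--Thomas). I would first record this inequality and then apply it term-by-term after writing each directed conditional entropy as a difference of two ordinary entropies.

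First I would expand $H_g(v_+\mid v,P)=H_g(v_+,v,P)-H_g(v,P)$, and likewise for the plug-in version $\hat{H}_g$, so that by the triangle inequality
\[
\big|\hat{H}_g(v_+\mid v,P)-H_g(v_+\mid v,P)\big|
\le \big|\hat{H}_g(v_+,v,P)-H_g(v_+,v,P)\big|
+ \big|\hat{H}_g(v,P)-H_g(v,P)\big|.
\]
The joint variable $(Y_{v_+},Y_{v,P})$ ranges over at most $2+|P|\le 2+\Pmax$ coordinates, each in an alphabet of size $\le|\chi|$, so its support has size at most $|\chi|^{2+\Pmax}=|\xi|$; the marginal $(Y_{v,P})$ has support at most $|\chi|^{1+\Pmax}<|\xi|$. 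The hypothesis $\|\hat{p}_g(y_{v_+},y_{v,P})-p_g(y_{v_+},y_{v,P})\|_1<\delta$ controls the first term directly, and since marginalization is a stochastic map that can only contract $\ell_1$ distance, the marginal error is also $<\delta$. Applying the continuity bound to each term with $N\le|\xi|$ and $\theta<\delta$, and using that $\theta\mapsto\theta\log(|\xi|/\theta)$ is increasing on the relevant range, bounds each term by $\delta\log(|\xi|/\delta)$; combined with the stated threshold $\delta\log(|\xi|/\delta)\le \epsilon/4$ (or, to remove the factor of two coming from the two-term split, a direct continuity bound for conditional entropy in terms of the alphabet of $Y_{v_+}$ alone) this yields the claimed inequality. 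Because the hypotheses and the support bound $|\xi|$ are uniform over $v$ and $P$, so is the conclusion.

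The second half is entirely parallel: I would write $\hat{H}(v_+\mid v,P)-\hat{H}_g(v_+\mid v,P)$ as the difference of the corresponding joint and marginal entropy gaps, bound the $\ell_1$ error between $\hat{p}$ and $\hat{p}_g$ (and its marginal, again by contraction) by $\delta'$, and invoke the same continuity inequality with the appropriate support factor together with the hypothesis $\delta'\log(|\chi|/\delta')\le \epsilon'/4$. I expect the only point needing care to be the bookkeeping of constants and support sizes --- in particular checking that the small-$\ell_1$ regime required for the continuity bound (i.e.\ $\delta,\delta'$ small enough) is consistent with the threshold conditions, and tracking the factor introduced by splitting each conditional entropy into two unconditional entropies so that the final slack matches $\epsilon/4$ (respectively $\epsilon'$). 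Once the continuity-of-entropy lemma is in hand, the remaining content is routine.
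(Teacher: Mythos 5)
Your proposal is correct and rests on exactly the same key fact as the paper's own proof: the $\ell_1$-continuity of entropy (Theorem 17.3.3 of Cover--Thomas) combined with the monotonicity of $t \mapsto t\log(c/t)$ for $t < c/e$. The difference is one of completeness rather than of route. The paper applies the continuity bound only to the joint entropy $H_g(v_+,v,P)$ and then simply concludes, never treating the marginal term; you instead write $H_g(v_+\mid v,P) = H_g(v_+,v,P) - H_g(v,P)$, bound the marginal error via the $\ell_1$-contraction of marginalization, and add the two contributions. This extra care is warranted, and the factor-of-two issue you flag is real: the two-term split yields $2\,\delta\log(|\xi|/\delta) \le \epsilon/2$ rather than the stated $\epsilon/4$, so the lemma's constant requires either a tighter threshold on $\delta$ (e.g.\ $\delta\log(|\xi|/\delta)\le\epsilon/8$), a direct continuity bound for conditional entropy as you suggest, or a rescaling absorbed downstream in the choice of $\kappa$; the paper's proof avoids confronting this only because it silently drops the marginal term. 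The same remark applies to the second half, where the paper's statement also carries an apparent typo ($|\chi|$ in place of $|\xi|$ in the threshold condition) and an inconsistent slack ($\epsilon'$ in the conclusion versus the $\epsilon'/4$ used later in Lemma~2).
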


\begin{proof}
Applying theorem 17.3.3 from \cite{cover2006elements} for $\phat_g(y_{v_+},y_{v,P})$, and $ p_g(y_{v_+},y_{v,P})$, 
\begin{align}
\label{eq:infotheory}
   &|\Hhat_g(v_+,v,P)-H_g(v_+,v,P)| \leq  -\|\phat_g(y_{v_+},y_{v,P})- \\ \nonumber 
    & p_g(y_{v_+},y_{v,P})\|_1 \times \log\frac{\|\phat_g(y_{v_+},y_{v,P})-p_g(y_{v_+},y_{v,P})\|_1}{|\xi|}  
\end{align}

In Eqn. (\ref{eq:infotheory}) above, $|\xi|$ is used instead of $|\xi_{P}|$, which is the alphabet size of set $P$ as $|\xi| \geq |\xi_P|$. The R.H.S of Eqn.~(\ref{eq:infotheory}) is of the form: $f(t) = t.log(\frac{c}{t})$. Note that,$f(t)$ increases when $t<\frac{c}{e}$. Thus,
\begin{align}
   &|\Hhat_g(v_+,v,P)-H_g(v_+,v,P)| \leq \delta.log\left(\frac{|\xi|}{\delta} \right) \leq \frac{\epsilon}{4}
\end{align}
The same proof structure applies for the second part of the Lemma as well.    
\end{proof}

Let $\lambda^*$ be the operator norm of the Markov kernel of \{W(t)\} on zero-mean square-integrable functions (as in \cite{JMLR:v22:19-479}). \{W(t)\} is same as \textbf{PIM} except that there are no intermittent resets. Following Lemma is a modification of Lemma 8 in \cite{bagewadi2024learninginfluencegraphhighdimensional}, for $\{W(t)\}$. In the following Lemma, $w_x$ has the same interpretations as $y_x$, and $\pi$ is the stationary distribution of $\{W(t)\}$, and $\hat{p}_w$ is the empirical estimate obtained from the samples of $\{W(t)\}$.
\begin{Lemma} \label{lem: w_bound}
If $|\lambda^*| < 1$, following is true $\forall \, \Delta>0$:
\begin{align}
&\prob\!\left( \|\hat{p}_w(w_{v,+}, w_{v,P}) - \pi(w_{v,+}, w_{v,P})\|_1 > \Delta \right) \\
&\leq 2|V||\xi|\!\left(\frac{|V|}{|P|_{\max}}\right) \times \nonumber \\
   & \exp\!\left(\frac{-2(1-|\lambda^*|)(T-1)\Delta^2}{(1+|\lambda^*|)(|\xi|^2)}\right) \nonumber
\end{align}
\end{Lemma}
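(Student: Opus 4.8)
The plan is to reduce the $\ell_1$ deviation to a collection of scalar (per-symbol) deviations, control each of those by a spectral-gap concentration inequality for the finite-state chain $\{W(t)\}$, and then reassemble the stated prefactor through union bounds. I read the left-hand side as the uniform event over all nodes $v\in V$ and all conditioning sets $P$ with $|P|\le\Pmax$, which is what the combinatorial factor $|V|\binom{|V|}{\Pmax}$ in the bound is meant to absorb and what the downstream use in Lemma~\ref{lem:7_infoth} (the ``$\forall v\in V,\,P\subset V$'' statements) actually requires.

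First I would convert the $\ell_1$ event into a union of scalar events. The joint variable $(w_{v_+},w_{v,P})$ ranges over at most $|\xi|=|\chi|^{2+\Pmax}$ symbols, so if $\|\hat p-\pi\|_1>\Delta$ then some symbol $s$ must satisfy $|\hat p(s)-\pi(s)|>\Delta/|\xi|$. A union bound over the $\le|\xi|$ symbols, together with the union over the $v$ and the $P$ that the uniform statement ranges over, reduces the task to bounding $\prob\big(|\hat p(s)-\pi(s)|>\Delta/|\xi|\big)$ for a single fixed triple $(v,P,s)$.

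Second, I would view each empirical count $\hat p(s)=\frac{1}{T-1}\sum_{t=0}^{T-2}\mathbbm{1}[(W_{v_+}(t),W_{v,P}(t))=s]$ as a normalized additive functional of $\{W(t)\}$, namely a $[0,1]$-valued function of the consecutive pair $(W(t),W(t{+}1))$. Since the cap $\bar{M}$ keeps the state space finite ($|\chi|\le \tfrac{\bar{M}(\bar{M}+1)}{2}+2$) and $\{W(t)\}$ has second eigenvalue $\lambda^*$, I would invoke the Hoeffding-type concentration inequality for Markov chains in its spectral-gap form---the memory-$1$ specialization of Lemma~8 of~\cite{bagewadi2024learninginfluencegraphhighdimensional}, which itself rests on the references cited there---to obtain a two-sided bound of shape $2\exp\big(-\tfrac{2(1-\lambda^*)}{1+\lambda^*}(T-1)\eta^2\big)$ at deviation level $\eta=\Delta/|\xi|$. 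Substituting $\eta^2=\Delta^2/|\xi|^2$ produces the $(T-1)\Delta^2/|\xi|^2$ factor in the exponent and the leading $2$. I would then discharge the hypothesis $\lambda^*\le 2(\bar{\mu}+L)|\rho(\tilde{A})|<1$: because $x\mapsto\frac{1-x}{1+x}$ is decreasing on $[0,1)$, replacing $\lambda^*$ by the larger $2(\bar{\mu}+L)|\rho(\tilde{A})|$ only weakens the exponent, so the per-symbol bound stays valid after the substitution. Multiplying by the union-bound prefactor $|V|\binom{|V|}{\Pmax}|\xi|$ then yields exactly the claimed expression.

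I expect the main obstacle to be the second step: justifying the spectral-gap Hoeffding inequality in the precise $\tfrac{1-\lambda^*}{1+\lambda^*}$ form for this chain. The two delicate points are (i) that the functional depends on the consecutive pair $(W(t),W(t{+}1))$ rather than a single state, so I must pass to the lifted pair-chain and argue its relevant spectral quantity is still governed by $\lambda^*$; and (ii) that the comparison is to the stationary $\pi$, which needs either a stationary start or a burn-in correction folded into the constant---I would note that the inequality is invoked in its stationary form, matching the appearance of $\pi$ on the left-hand side. Aligning the exact constants (the factor $2$ and the $1/(1+\lambda^*)$ normalization) with the cited inequality is where the bookkeeping, rather than any new idea, concentrates.
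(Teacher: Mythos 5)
Your proposal is correct and follows essentially the same route as the paper: the paper likewise lifts to the pair chain $Z(t)=(W(t),W(t-1))$, applies a spectral-gap Hoeffding inequality for Markov chains (Theorem 3 of the cited JMLR reference) to the normalized indicator functional, converts the $\ell_1$ event to per-symbol events with $\Delta_1=\Delta/|\xi|$, union-bounds over symbols, $v$, and $P$, and finally substitutes $\lambda^*\le 2(\bar{\mu}+L)|\rho(\tilde{A})|$ (from Lemma 6 of the prior work), which is valid for exactly the monotonicity reason you state. Even the delicate point you flag---that the spectral quantity of the lifted pair chain must still be governed by $\lambda^*$---is present in the paper's proof, where it is likewise asserted rather than separately argued.
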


\begin{proof}
    Consider the Markov chain $Z(t) = (W(t), W(t-1))$. 
We define a function 
\begin{align*}
f(Z(t)) 
&= \frac{\mathbb{1}\big( (W(t), W_{v,P}(t-1)) = (w_{v,+}, w_{v,P}) \big)}{T-1} \nonumber \\
\E f(Z(t)) &= \frac{\pi(w_{v+}, w_{v,P})}{T-1}
\end{align*}
where $\mathbb{1}$ is the indicator function. Clearly, the range of f is $\left[0, \frac{1}{T-1}\right]$.Now, using the Hoeffding concentration inequality for Markov chains stated in Theorem 3 of \cite{JMLR:v22:19-479} on $f(Z(t)$, we get $\forall \Delta_1 >0$
\begin{align}
&\prob\!\left( \left|\frac{1}{T-1}\sum_{t=1}^{T} \left( f(Z(t)) - \mathbb{E}f(Z(t)\right))\right| > \Delta_1 \right) \\
&\leq 2 \exp\!\left( -\frac{2(1-|\lambda^*|)(T-1) \Delta_1^2}{(1+|\lambda^*|)} \right) \nonumber
\end{align}

By applying union bound on $(w_{v,+}, w_{v,p})$, we have $(\Delta = |\xi| \Delta_1)$
\begin{align}
&\prob\!\left( \|\hat{p}_w(w_{v,+}, w_{v,p}) - \pi(w_{v,+}, w_{v,p})\|_1 > \Delta \right) \\
&\leq 2|\xi| \exp\!\left( -\frac{(1-|\lambda^*|)2(T-1)(\Delta/|\xi|)^2}{(1+|\lambda^*|)} \right) \nonumber
\end{align}

which implies
\begin{align}
&\prob\!\left( \|\hat{p}(w_{v,+}, w_{v,p}) - \pi(w_{v,+}, w_{v,p})\|_1 > \Delta \right) \\ 
&\leq 2|\xi| \exp\!\left( -\frac{2(1-|\lambda^*|)(T-1)\Delta^2}{(1+|\lambda^*|)(|\xi|^2)} \right) \nonumber
\end{align}

Using the result in Lemma 6 of \cite{bagewadi2024learninginfluencegraphhighdimensional} that $\lambda^* \le 2(\bar{\mu} + L) |\rho(\tilde{A})|$, and applying union bound on $v$ and $P$, 
\begin{align}
&\prob\!\left( \|\hat{p}(w_{v,+}, w_{v,P}) - \pi(w_{v,+}, w_{v,P})\|_1 > \Delta \right) \\
&\leq 2|V|\!\left(\frac{|V|}{|P|_{\max}}\right)|\xi|
   \exp\!\left(-\frac{2(1-2(\bar{\mu} + L) |\rho(\tilde{A})|)(T-1)\Delta^2}{(1+2(\bar{\mu} + L) |\rho(\tilde{A})|)(|\xi|^2)}\right) \nonumber
\end{align}

\end{proof}
Following similar steps to those in Lemmas~9 and 10 of \cite{bagewadi2024learninginfluencegraphhighdimensional}, an upper bound on the distance from the stationary distribution is derived in terms of the parameters of the dynamics and is compared with a lower bound on the same in terms of $|\lambda^*|$, which is $|\lambda^*| \le 2(\bar{\mu} + L) |\rho(\tilde{A})|$ 
\begin{Lemma}
    \label{lem:soltn}
     If the conditions mentioned in Theorem. 1 are satisfied, and under Assumption 1,$\|\phat(y_{v_+},y_{v,P})-\phat_g(y_{v_+},y_{v,P})\|_1 < \delta'$  and $\|p_g(y_{v_+},y_{v,P})-\phat_g(y_{v_+},y_{v,P})\|_1 < \delta$ with probability at least $1-\gamma$,$\forall v \in V, P\subset V$, where $\pi(.)$ is the stationary distribution of $Y(t)$. 
\end{Lemma}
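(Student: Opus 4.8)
The plan is to prove the two $\ell_1$ bounds separately, split the failure budget $\gamma$ into two halves, and combine them by a union bound at the end; the union over all $v \in V$ and $P \subset V$ is already absorbed into the combinatorial prefactor $|V|^{(P_{\max}+1)}|\xi|$ that sits inside the logarithm in Theorem~\ref{thm:2}. The two bounds draw on the two terms appearing in the sample-complexity maximum.

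For the concentration bound $\|p_g-\hat{p}_g\|_1<\delta$, I would first observe that once the reset instants are revealed and the samples are re-paired as in~(\ref{eq:genie_pairs}), every retained pair $[\,\cdot\,,\,\cdot\,]$ is a genuine one-step transition of the reset-free chain $\{W(t)\}$. Hence the genie empirical measure $\hat{p}_g$ is an empirical joint distribution built entirely from valid transitions, whose target is the stationary joint $p_g$ (equivalently $\pi$). I would then invoke the Markov-chain Hoeffding bound exactly as in Lemma~\ref{lem: w_bound}, now with deviation level $\Delta=\delta$, while accounting for the fact that the resets fragment the trajectory into segments and therefore reduce the effective number of usable transition samples by a factor of order $d$. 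This reduction is precisely what produces the additive $(d^2-d+2)$ offset and the $(d+1)^{-1}$ scaling of $|\xi|^2$ in the second term of the maximum; choosing $T$ to meet that term forces the resulting failure probability below $\gamma/2$.

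For the bias bound $\|\hat{p}-\hat{p}_g\|_1<\delta'$, which is where the genuinely new work lies, the key structural observation is that $\hat{p}$ and $\hat{p}_g$ are built from the \emph{same} time window and disagree only on pairs straddling a reset: at a non-tail step the two pairings coincide, whereas each tail contributes one mismatched pair to $\hat{p}$ in place of the correct pair used by $\hat{p}_g$. Since replacing a single pair perturbs a normalized empirical measure by at most $2/(T-1)$ in $\ell_1$, if $K$ is the number of tails in the window then
\[
\|\hat{p}-\hat{p}_g\|_1 \;\leq\; \frac{2K}{T-1}.
\]
It then remains to show $K<\delta'(T-1)/2$ with high probability. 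Because each step is an independent Bernoulli trial, $K=\sum_t(1-C(t))$ is a sum of i.i.d.\ $\mathrm{Ber}(1-p)$ variables with mean $(1-p)(T-1)\approx (T-1)^{\alpha}/\beta_1$, so a Chernoff/Hoeffding bound controls $\mathbf{P}(K>\delta'(T-1)/2)$. The hypothesis $\beta_1\delta'-4(T-1)^{\alpha-1}>c_1$ is exactly the margin needed: it keeps the deterministic mean contribution $2\,\E[K]/(T-1)\approx 2(T-1)^{\alpha-1}/\beta_1$ comfortably below $\delta'$, leaving slack of order $c_1$ for the fluctuation term, which drives the failure probability below $\gamma/2$ once $T$ meets the first term $\log(\cdot/\gamma)\,(12\delta'\beta_1/c_1^2)$ of the maximum.

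The main obstacle is this bias bound. The decisive conceptual step is recognizing that $\hat{p}$ and $\hat{p}_g$ live on a common index set and differ only at tails, so the error is governed purely by the tail count rather than by any distributional mismatch between resampled values. The delicate quantitative point is reconciling the two competing rates---the vanishing reset probability $1-p\sim(T-1)^{\alpha-1}$ against the fixed target $\delta'$---which is encoded precisely by $c_1$ and the relation $\beta_1\delta'-4(T-1)^{\alpha-1}>c_1$; verifying that this single choice simultaneously annihilates the mean contribution and leaves enough room for the Chernoff fluctuation is where the calculation must be carried out with care. Combining the two halves by a union bound then yields both estimates simultaneously with total confidence $1-\gamma$, completing the proof.
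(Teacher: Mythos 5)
Your proposal follows the paper's proof essentially step for step: the same union-bound split into the two events; the same tail-counting argument $\|\hat{p}-\hat{p}_g\|_1 \le 2K/(T-1)$ followed by a Chernoff bound on the number of tails with mean $(T-1)^{\alpha}/\beta_1$ and margin $c_1$; and the same reduction of the genie term $\|\hat{p}_g - p_g\|_1$ to the reset-free chain $\{W(t)\}$ via Lemma~\ref{lem: w_bound} with an effective sample count of order $T/(d+1)$, which the paper makes explicit through the all-tails worst case and the collation of Example~1 giving $T_{\min}=(T+d^2-1)/(d+1)$ (your ``accounting for fragmentation'' is exactly this construction, just stated less explicitly). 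The only cosmetic difference is that you split the failure budget into two $\gamma/2$ halves, while the paper bounds the sum of the two exponential terms by a prefactor times their maximum and requires that quantity to be at most $\gamma$.
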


\begin{proof}
Clearly, using union bound, we need to upper-bound,  
\begin{align}    
    \label{eq:union_bound1_1}
    & \leq \prob(\|\phat(y_{v_+},y_{v,P})- \phat_g(y_{v_+},y_{v,P})\| > \delta')\\
    \label{eq:union_bound1_2}
    & + \prob(\|\phat_g(y_{v_+},y_{v,P})-p_g(y_{v_+},y_{v,P})\|_1>\delta)
\end{align} 

In (\ref{eq:union_bound1_1}), all the probability estimates that we require are of the form $p(y_{v_+}|y_{v,S})$ for some $S \subset V$: The samples have to be paired across time before estimation: $\{y_{v,S}(t), y_{v,S}(t+1)\}_{t=0}^{T-1}$ as in (\ref{eq:PIM_pairs}). Hence, there are $T-1$ total samples available for the estimation of $\hat{p}$ and $\hat{p}_g$. Thus $\| \hat{p} -\hat{p}_g \|$ will be a multiple of $\frac{2}{T-1}$. Since the discrepancy between $\hat{p}$ and $\hat{p}_g$ is due to  the recurring tails, (\ref{eq:union_bound1_1}) is:\\
\begin{align}
\nonumber
&= \prob(\|\phat(y_{v_+},y_{v,P})- \phat_g(y_{v_+},y_{v,P})\| > \delta') \\
&\leq \prob\left(\# tails \geq \frac{\delta'(T-1)}{2}\right)
\end{align}
Now, applying Chernoff bound for binomial distribution with $\mu = (1-p)(T-d-1)=\frac{(T-1)^{\alpha}}{\beta_1}$. Recall, that $\alpha<1$, and $0<\beta_1<1$.So, the above inequality is
\begin{align}
\nonumber
&\leq \exp\left(-\mu \left(\frac{\delta'(T-1)}{4\mu} -1\right)^2 \left(\frac{\delta'(T-1)}{4\mu} +2\right)^{-1} \right) \\
\label{eq: chernoff}
& = \exp\left(-\frac{(T-1)}{4}\frac{\left(\delta' - 4(T-1)^{\alpha-1}\right)^2} {\left(\delta' +8(T-1)^{\alpha-1}\right)} \right)
\end{align}

Using the condition that $\delta' - 4(T-1)^{\alpha-1}> c_1>0$, and for some small $c>0$, Inequality.~(\ref{eq: chernoff}) is
\begin{align}
&\leq \exp\left(-\frac{(T-1)}{4}\frac{\left( c_1\right)^2} {\left(\delta' +  2\delta' - 2c_1\right)} \right) \\
    \label{eq: first}
&\leq c\exp{\left(-\frac{Tc_1^2}{12\delta'\beta_1}\right)}
\end{align}

Notice that the (\ref{eq:union_bound1_2}) resembles the probability of estimation error between the empirical distribution $\hat{p}_g$, and its stationary distribution. So, to bound the second term we look at the worst case scenario—when all tosses resulted in tails—upon forming pairs by skipping $d$ time steps from the post-processing, effectively reduces its run to a Markov chain resembling the standard process $\{W(t)\}$(check Example. 1). The resulting sequence has an effective length given by
\[
T_{\min} = \frac{T + d^2 - 1}{d + 1}.
\]

Let $\hat{p}_{rM}(\cdot)$ denote the probability estimates obtained using only the collated $T_{\min}$ samples. The stationary distribution of $Y(t)$, denoted $\pi(\cdot)$, is identical to that of the Markov chain $\{W(t)\}$. Moreover, $\hat{p}_{rM}(\cdot)$ matches exactly the estimates from the same run of $\{W(t)\}$, i.e., $\hat{p}_{rM} \equiv \hat{p}_w$, where $\hat{p}_w(\cdot)$ denotes estimates from $\{W(t)\}$.So,

\begin{align}
   & \ \ \ \ \ \,\prob(\|\phat_g(y_{v_+},y_{v,P})-p_g(y_{v_+},y_{v,P})\|_1>\delta)\\
   & \leq \prob\left(\|\phat_{rM} (y_{v_+},y_{v,P})-p_g(y_{v_+},y_{v,P})\|_1>\delta\right) \\
   \nonumber
   & = \prob\left(\|\phat_{w} (y_{v_+},y_{v,P})-\pi(y_{v_+},y_{v,P})\|_1>\delta\right)
\end{align}
Thus, above Inequality equals the deviation between the empirical and stationary distributions of $\{W(t)\}$.So, Inequality~(\ref{eq: smita_eqn}) follows directly from Lemma~\ref{lem: w_bound}:
\begin{align}
    \label{eq: smita_eqn}
    & \leq 2\cdot |V|^{(|P|_{\max}+1)} \times |\xi| \exp\left( -\frac{2\left(1 - |\lambda^{*}|\right)(T_{min}-1)\delta^2}{\left(1 + |\lambda^{*}|\right)|\xi|^2} \right) 
\end{align}

Combining the bounds in~(\ref{eq: first}) and~(\ref{eq: smita_eqn}) yields the following upper bound on (\ref{eq:union_bound1_1}) + (\ref{eq:union_bound1_2}):
\begin{align}
    \label{eq: final_probab_bound}
    &\leq \left( c + 2. |V|^{(P_{max} +1)} \times |\xi|\right) \times \\ \nonumber & \max{\left\{\exp\left( -\frac{\left(1 - |\lambda^{*}|\right)(T_{min}-1)\delta^2}{2\left(1 + |\lambda^{*}|\right)|\xi|^2} \right),\exp{\left(-\frac{Tc_1^2}{12\delta'\beta_1}\right)} \right\}}
\end{align}

If T is large enough such that Inequality.~(\ref{eq: final_probab_bound}) is at most $\gamma$ (Theorem. 1), $\|\phat(y_{v_+},y_{v,P})-\phat_g(y_{v_+},y_{v,P})\|_1 < \delta'$  and $\|p_g(y_{v_+},y_{v,P})-\phat_g(y_{v_+},y_{v,P})\|_1 < \delta$, $\forall v \in V, P\subset V$ with probability at least $1-\gamma$, thus, satisfying Lemma~\ref{lem:7_infoth},which in turn provides the threshold to the PIMRecGreedy algorithm to distinguish between the neighbors and non-neighbors, according to Lemma~\ref{lem:tau_modified}.
\end{proof}

\Example{A worst-case scenario} \\ 
When $d=2$, and all coin tosses were tails, the data would be paired up as the following sequence under the knowledge of instants when resets occurred (in this case, everywhere).

\begin{align} \nonumber
    &[y(1),y(2)],[y(2),y(3)],[y(1),y^{(2)}(2)],[y(2),y^{(2)}(3)], \\ \nonumber &[y(3),y(4)], [y^{(2)}(2),y^{(3)}(3)], [y^{(2)}(3),y^{(2)}(4)],\\ \nonumber
     &[y(4),y(5)],[y^{(3)}(3),y^{(3)}(4)],... \\ \nonumber
\end{align}

On picking up a pair after every $d=2$ pairs, will result in the following sequence, 
 \begin{align} \nonumber
    [y(1),y^{(2)}(2)],[y^{(2)}(2),y^{(3)}(3)],[y^{(3)}(3),y^{(3)}(4)],..
 \end{align}
Note that the above sequence resembles a normal Markov run just like that of the linear Markov model ${W(t)}$.

\end{document}